\documentclass[11pt]{article}

\usepackage{amsmath}
\usepackage{amssymb}
\usepackage{amsthm} %For Box at \end{proof}
\usepackage{mathrsfs}
\usepackage[dvipdfmx]{graphicx,color}
\setlength{\topmargin}{0cm}
\setlength{\headheight}{0cm}
\setlength{\textheight}{22cm}
\setlength{\textwidth}{14cm}
\setlength{\oddsidemargin}{1cm}
\setlength{\evensidemargin}{1cm}

\nopagebreak[3]

%\usepackage{showkeys}

%%%%%  numbering
\newcommand{\newsection}[1]
{\section{#1}\setcounter{theorem}{0} \setcounter{equation}{0} \par\noindent}

\newtheorem{theorem}{Theorem}

\newtheorem{lemma}[theorem]{Lemma}

\newtheorem{remark}[theorem]{Remark}

%%%%%  Command
\newcommand{\beq}{ \begin{equation} }
\newcommand{\eeq}{ \end{equation} }

%%%%%  Title
\title{
On the derivation of several second order 
\\ 
partial differential equations from a generalization 
\\
of the Einstein equation}
%%%%%  Authors
\author
{
Makoto NAKAMURA
\thanks
{
{Faculty of Science, Yamagata University, 
Kojirakawa-machi 1-4-12, Yamagata 990-8560, JAPAN.}
E-mail: \texttt{nakamura@sci.kj.yamagata-u.ac.jp}
}
}
\date{}

\begin{document}

\maketitle

\begin{abstract}
A generalization of the Einstein equation is considered for complex line elements.
Several second order semilinear partial differential equations are derived from it 
as semilinear field equations in uniform and isotropic spaces.
The nonrelativistic limits of the field equations are also considered.
The roles of spatial variance are studied based on energy estimates, 
and several dissipative or antidissipative properties are remarked.
\end{abstract}

\noindent
{\it Mathematics Subject Classification (2010)}: Primary 35Q75; Secondary  35G20, 35Q76. \\
%35G20 Nonlinear higher-order equations
%35Q75 PDEs in connection with relativity and gravitational theory
%35Q76 Einstein equations

\noindent
{\it Keywords}: 
semilinear partial differential equations, 
nonrelativistic limit, 
Einstein equation.

%%%%%%%%%%%%%%%%%%%%%%%%%%%%%%%
%
%%%%%%%%%%%%%%%%%%%%%%%%%%%%%%%

\newsection{Introduction}
In this paper, we report some results on the relation between the Einstein equation 
and several second order partial differential equations.
We consider a generalization of the Einstein equation 
for non-Hermitian complex line elements of the form  
$g_{\alpha\beta}(z) dz^\alpha dz^\beta$, 
where $\{g_{\alpha\beta}\}_{0\le \alpha,\beta\le n}$ are complex-valued functions 
and $z=(z^0,\cdots, z^n)$ $\in \mathbb{C}^{1+n}$.
Under the cosmological principle, we give the solution of the generalized Einstein equation as   
\[
g_{\alpha\beta}dz^\alpha dz^\beta
=
-c^2(dz^0)^2+
a(z^0)^2q^2\left(1+\frac{k^2 r^2}{4} \right)^{-2}\sum_{j=1}^n (dz^j)^2,
\]
where $c>0$ is the speed of light, $q(\neq 0), k \in \mathbb{C}$ are constants, 
and 
$a$ is a complex-valued function which denotes the scale-function of the space.
This is known as the Robertson-Walker metric 
when $n=3$, $z=(t,x^1,x^2,x^3)\in \mathbb{R}^4$, $q=1$ and $k^2=0, \pm 1$,
where $k^2$ denotes the curvature of the space.
There is a large body of literature on the generalization of the Einstein equation  
for Hermitian line elements, complex line elements and general dimensions 
(see e.g., 
\cite{
ChoquetBruhat-2009-MilanJMath, 
ChoquetBruhat-2009-Oxford, 
Goenner-2004-LivingReviews,
Goenner-2014-LivingReviews}).

As the equation of motion of massive scalar field described  
by a complex-valued function $\phi=\phi(z^0,\cdots,z^n)$ 
with mass $m$ and potential $\lambda |\phi|^{p-1}\phi^2/(p+1)$ 
with $\lambda \in \mathbb{C}$ and $1\le p<\infty$, 
we derive the second order differential equation 
\begin{equation}
\label{Eq-Field-2-Intro}
-\frac{1}{c^2}\left(
\partial_0^2+\frac{n\partial_0 a}{a}\partial_0+\frac{m^2c^4}{\hbar^2}
\right)\phi
+
\frac{1}{a^{2}} \Delta_z \phi
+
\lambda   |\phi|^{p-1}\phi
=0,
\end{equation}
where $\hbar$ is the Planck constant, $\partial_0:=\partial/\partial z^0$ and 
$\Delta_z:=\sum_{j=1}^n \partial^2/(\partial z^j)^2$.
We also show the nonrelativistic limit of this equation yields the equation 
\begin{equation}
\label{Eq-Field-2-NR-Intro}
\pm i\frac{2m}{\hbar} \partial_0 u+\frac{1}{a^2} \Delta_z u
+
\lambda  |uw|^{p-1}u=0
\end{equation}
with a suitable transformation from $\phi$ to $u=u(z^0,\cdots,z^n)$, 
where $i:=(-1)^{1/2}$ 
and $w$ is a weight function defined by $w(z^0):=b_0(a(0)/a(z^0))^{n/2}$ 
for a constant $b_0\in \mathbb{C}$.
By a transformation (see \eqref{z-x}, below) from the complex coordinates $z$ to the real coordinates 
$(t,x^1,\cdots, x^n)\in \mathbb{R}^{1+n}$, 
the equations \eqref{Eq-Field-2-Intro} and \eqref{Eq-Field-2-NR-Intro} 
give typical second order partial differential equations.
For example, let us consider the simplest case that the scale-function is a constant 
$a=1$.
From \eqref{Eq-Field-2-Intro} and \eqref{Eq-Field-2-NR-Intro}, 
we obtain the Klein-Gordon equation 
\beq
\label{Intro-KG}
\partial_t^2 \phi
-c^2\Delta_x \phi
+\frac{m^2c^4}{\hbar^2} \phi
-c^2\lambda   |\phi|^{p-1}\phi
=0,
\eeq
the Schr\"odinger equation 
\beq
\label{Intro-Schroedinger}
\pm i\frac{2m}{\hbar} \partial_t u+ \Delta_x u
+
\lambda  |u|^{p-1}u=0,
\eeq
the elliptic equation
\beq
\label{Intro-Elliptic}
\partial_t^2 \phi
+
c^2\Delta_x \phi
+\frac{m^2c^4}{\hbar^2} \phi
-
c^2\lambda   |\phi|^{p-1}\phi
=0,
\eeq
and the parabolic equation 
\beq
\label{Intro-Heat}
\frac{2m}{\hbar} \partial_t u- \Delta_x u
-
i\lambda  |u|^{p-1}u=0
\eeq
(see Section \ref{Section-PDE}, below), 
where we have put $\Delta_x:=\sum_{j=1}^n \partial^2/(\partial x^j)^2$.
The terms $\lambda |\phi|^{p-1}\phi$ and $\lambda |u|^{p-1}u$ are fundamental semilinear terms 
in nonlinear theory to consider more complicated nonlinear terms. 
For the last parabolic equation, 
we note that the dimension of $\hbar/m$ in the SI units is 
$\textsf{M}^2\textsf{S}^{-1}$ 
(\textsf{M}: meter, \textsf{S}: second),  
which is equivalent to the dimension of the thermal diffusivity 
$K_1$ of the heat equation 
$\partial_t u-K_1 \Delta_x u=0$, 
and also to the dimension of the diffusion coefficient $K_2$ of the diffusion equation 
$\partial_t u- K_2 \Delta_x u=0$.

%%%%%%%%%%%%
It is well-known that the Schr\"odinger equation 
\eqref{Intro-Schroedinger} is derived from the Klein-Gordon equation 
\eqref{Intro-KG} 
by the transform $\phi=u e^{\mp imc^2t/\hbar}$ and the nonrelativistic limit $c\rightarrow\infty$.
One of the aims of this paper is to show 
that the generalization of the Einstein equation gives a unified way 
to derive the above important partial differential equations.
It is remarkable that we have the unified derivation for the hyperbolic equation and the parabolic equation 
since these two equations have different physical backgrounds and derivations.
The solutions of the Klein-Gordon equation \eqref{Intro-KG} 
and the Schr\"odinger equation \eqref{Intro-Schroedinger} have the properties of waves, 
while the solutions of the equation \eqref{Intro-Heat} as the diffusion equation have the properties of particles. 
In complex coordinates, these equations  
are unified in the forms of \eqref{Eq-Field-2-Intro} and \eqref{Eq-Field-2-NR-Intro}. 
The properties of the equations are dependent on the local coordinates which we choose 
(see the wave-particle duality in 
\cite{deBroglie-1925-AnnDePhysique} and \cite{Einstein-1905-AnnalDerPhys}). 
In this sense, it is natural to consider the spacetime in complex coordinates, 
and we are able to regard \eqref{Eq-Field-2-Intro} and \eqref{Eq-Field-2-NR-Intro} 
as the equations which describe the properties of energy.

%%%%%%%%%%%%

In Section \ref{Section-Einstein}, we consider a generalization of the Einstein equation.

%%%%%%%%%%%%%%%%%%%%%%%
% Scale Function
%%%%%%%%%%%%%%%%%%%%%%%

In Section \ref{Section-Uniform}, 
we consider the spatial variance described by the scale-function 
$a$, which satisfies the Einstein equation with the cosmological constant 
in uniform and isotropic spaces.
The study of roles of the cosmological constant 
and the spatial variance is important to describe the history of the universe, 
especially, the inflation and the accelerating expansion of the universe 
(see, e.g., 
\cite{Guth-1981-PhysRevD}, 
\cite{Kazanas-1980-AstrophysicalJ},
\cite{Perlmutter-etal-1999-AstrophysicalJ},
\cite{Riess-Schmidt-etal-1998-AstronomicalJ},
\cite{Sato-1981-MonNotRAstrSoc}, 
\cite{Starobinsky-1980-PhysicsLettersB}).  
The scale-function \eqref{a}, below, follows from the equation of state when we regard the cosmological constant as the dark energy (see \eqref{Eq-State}, below).
In this paper, we study the cosmological constant from the point of view of 
partial differential equations.
Especially, we remark that some dissipative and antidissipative properties appear by the spatial variance.
These properties have been studied for the Klein-Gordon equation and the Schr\"odinger equation 
in de Sitter spacetime  
(see \cite{Nakamura-2014-JMAA} and \cite{Nakamura-2015-JDE} and the references therein).
We consider the properties for more general equations \eqref{Eq-Field-2-Intro} 
and \eqref{Eq-Field-2-NR-Intro} in this paper.

In Sections \ref{Section-Field} and \ref{Section-PDE}, we derive the above equations 
\eqref{Eq-Field-2-Intro}, \eqref{Eq-Field-2-NR-Intro}, 
\eqref{Intro-KG}, \eqref{Intro-Schroedinger}, \eqref{Intro-Elliptic} and  \eqref{Intro-Heat}.

In Section \ref{Section-Energy}, 
we consider the energy estimates of the equations 
\eqref{Eq-Field-2-Intro}  and \eqref{Eq-Field-2-NR-Intro}.
We show that the spatial variance described by the scale-function $a$ gives dissipative and antidissipative properties for the estimates.

In Section \ref{Section-Vilenkin}, 
we give some remarks on Vilenkin's model 
of the early universe in our setting of complex line elements 
(see \cite{Vilenkin-1982-PhysicsLetters} for Vilenkin's model).
Vilenkin's model shows that the purely imaginary time axis $it$ for $t\in \mathbb{R}$ plays an important role to describe the birth of the universe from nothing through the tunnel effect.
This fact is one of motivations to 
regard the axes of the spacetime as the lines in the complex plane 
(see \eqref{z-x}, below).

In Section \ref{Section-Geodesic}, 
we also consider the geodesic curves defined by the complex line elements, 
and we show that the conservation law of the Hamiltonian is dependent on the scale-function in local coordinates,
although it is independent of the scale-function in proper time.

%%%%%%%%%%%%%%%%%%%%%%%%%%%%
% 
%%%%%%%%%%%%%%%%%%%%%%%%%%%%

One of motivations to consider the complex line element  
$g_{\alpha\beta} dz^\alpha dz^\beta$ is 
to generalize the following elementary observation.
The Riemann metric $(cdx^0)^2+(dx^1)^2+\cdots+(dx^n)^2$ 
and the Lorentz metric $-(cdx^0)^2+(dx^1)^2+\cdots+(dx^n)^2$ 
for $x=(x^0,x^1,\cdots, x^n)\in \mathbb{R}^{1+n}$ are unified in a single form 
$(cdz^0)^2+(dz^1)^2+\cdots+(dz^n)^2$ for $z\in \mathbb{C}^{1+n}$ 
since 
$(z^0,z^1,\cdots,z^n)=(x^0,x^1,\cdots,x^n)$ gives the former metric 
and  
$(z^0,z^1,\cdots,z^n)=(ix^0,x^1,\cdots,x^n)$ gives the latter metric.
Let us consider two coordinates 
$z=(z^0,z^1, \cdots, z^n)\in \mathbb{C}^{1+n}$ and 
$z_\ast=(z_\ast^0,z_\ast^1, \cdots, z_\ast^n)\in \mathbb{C}^{1+n}$ which satisfy the invariance of line elements 
\[
(c dz^0)^2+\sum_{j=1}^n (dz^j)^2
=
(c dz_\ast^0)^2+\sum_{j=1}^n (dz_\ast^j)^2.
\]
Let us assume 
$z^j=z_\ast^j$ for $2\le j\le n$ for simplicity.
For $\theta\in \mathbb{C}$, 
the transform  
\[
\begin{pmatrix}
cz_\ast^0 \\
z_\ast^1
\end{pmatrix}
= 
\begin{pmatrix}
\cos\theta & -\sin\theta \\
\sin\theta &\cos\theta
\end{pmatrix}
\begin{pmatrix}
cz^0 \\
z^1
\end{pmatrix}
\] 
satisfies this invariance.
For any fixed $-\pi/2<\omega\le \pi/2$, 
let us consider the lines 
$z^0=it$, $z^1=x^1e^{i\omega}$,
$z_\ast^0=it_\ast$ and $z_\ast^1=x_\ast^1 e^{i\omega}$ 
in the complex plane $\mathbb{C}$,
where $t,t_\ast, x^1,x_\ast^1\in \mathbb{R}$.
Then we have the transform 
\[
\begin{pmatrix}
ct_\ast \\
x_\ast^1
\end{pmatrix}
=
\begin{pmatrix}
\cos\theta & i e^{i\omega}\sin\theta \\
i e^{-i\omega} \sin\theta &\cos\theta
\end{pmatrix}
\begin{pmatrix}
ct \\
x^1
\end{pmatrix}.
\]
So that, 
if $\omega=\pi/2$ and $\theta  \in \mathbb{R}$, 
then we have the rotational transform 
\[
\begin{pmatrix}
ct_\ast \\
x_\ast^1
\end{pmatrix}
=
\begin{pmatrix}
\cos\theta & -\sin\theta \\
\sin\theta &\cos\theta
\end{pmatrix}
\begin{pmatrix}
ct \\
x^1
\end{pmatrix}.
\]
If $\omega=0$ and $i\theta \in \mathbb{R}$, 
then we have the Lorentz transform  
\[
\begin{pmatrix}
ct_\ast \\
x_\ast^1
\end{pmatrix}
=
\begin{pmatrix}
\cosh(i\theta) & \sinh(i \theta) \\
\sinh(i\theta) & \cosh(i\theta)
\end{pmatrix}
\begin{pmatrix}
ct \\
x^1
\end{pmatrix}.
\]
So that, the complex line element naturally unifies the rotational transform and the Lorentz transform.
Based on this observation, we further study the properties of the complex line elements 
through the Einstein equation.
We report some fundamental results (derivations and energy estimates) 
on the equations \eqref{Eq-Field-2-Intro} and \eqref{Eq-Field-2-NR-Intro} 
in this paper.
More detailed results on the equations will appear in the future.

%%%%%%%%%%%%%%%%%%%%%%%
% Derivation of equations
%%%%%%%%%%%%%%%%%%%%%%%

\newsection{A generalization of the Einstein equation}
\label{Section-Einstein}
In this section, we generalize the Einstein equation  
for the case of three spatial dimensions and real line elements 
(see e.g., \cite{Carroll-2004-Addison} and \cite{DInverno-1992-Oxford})
into the case of general dimensions and complex line elements. 
Although we are based on the classical argument for the former case, 
we show the details for the completeness of the paper.
In the following, Greek letters $\alpha, \beta,\gamma,\cdots$ run from $0$ to $n$, 
Latin letters $j,k,\ell, \cdots$ run from $1$ to $n$.
We use the Einstein rule for the sum of indices of tensors, 
for example, 
${T^\alpha}_\alpha:=\sum_{\alpha=0}^n {T^\alpha}_\alpha$ 
and 
${T^i}_i:=\sum_{i=1}^n {T^i}_i$.
For any $C^1$-curve $C$ in the complex plane $\mathbb{C}$ 
connecting a point $A\in \mathbb{C}$ to a point $B\in \mathbb{C}$ 
parametrized by $z=z(x)\in \mathbb{C}$ for $x\in \mathbb{R}$ with $dz/dx\neq0$, 
we note that the integration by parts 
\[
\int_{C} \frac{df}{dz}(z)g(z) dz =(fg)(B)-(fg)(A)-\int_C f(z)\frac{dg}{dz}(z) dz
\]
holds for any $C^1$-functions $f$ and $g$.
For real variables $x=(x^0,\cdots,x^n)\in \mathbb{R}^{1+n}$  
and arbitrarily fixed real numbers $(\omega^0,\cdots,\omega^n)\in (-\pi/2, \pi/2]^{1+n}$,  
we consider complex variables $z=(z^0,\cdots,z^n)\in \mathbb{C}^{1+n}$ 
parametrized by 
\begin{equation}
\label{z-x}
z^\alpha=e^{i\omega^\alpha}x^\alpha.
\end{equation}
We put   
$\partial_\alpha:=\partial/\partial z^\alpha= e^{-i\omega^\alpha}\partial/\partial x^\alpha$.
We define a $(1+n)$-dimensional manifold  
$\mathcal{M}:=\{z\in \mathbb{C}^{1+n}\ |\ z^\alpha=e^{i\omega^\alpha}x^\alpha, \ x^\alpha\in \mathbb{R},\ 0\le \alpha\le n\}$.
We consider a bilinear symmetric complex-valued functional $\langle \cdot, \cdot \rangle$ 
on the vector space spanned by the vectors 
$\{\partial_\alpha\}_{0\le \alpha\le n}$. 
We put 
$g_{\alpha\beta}(z):=\langle \partial_\alpha, \partial_\beta\rangle$.
We denote by $(g_{\alpha\beta}(z))$ the matrix 
whose components are given by $\{g_{\alpha\beta}(z)\}_{0\le \alpha,\beta\le n}$.
Put $g(z) :={\rm det}(g_{\alpha\beta}(z))$. 
Let $(g^{\alpha\beta}(z))$ be the inverse matrix of 
$(g_{\alpha\beta}(z))$.
We consider a line element  
\begin{equation}
\label{Tau}
-(c d\tau)^2=(d\ell)^2:=g_{\alpha\beta}(z) dz^\alpha dz^\beta,
\end{equation}
where $\tau$ denotes the proper time 
and we take the square root of $(c d\tau)^2$ as 
$-\pi< \mbox{arg}\, (c d\tau) \le \pi$. 
We define $dz$ by  
\[
dz=dz^0\wedge \cdots \wedge dz^n
:=\sum_{\sigma} \mbox{\rm sgn} (\sigma) \, dz^{\sigma(0)} \cdots dz^{\sigma(n)},
\] 
where $\sigma$ denotes the permutation of $\{0,\cdots, n\}$.
For the change of variables $x$ to $y=(y^0,\cdots,y^n)\in \mathbb{R}^{1+n}$ by $y=y(x)$, 
we consider the complex variables $w=(w^0,\cdots,w^n)$ by $w^\alpha=e^{i\omega^\alpha} y^\alpha$.
Then we have 
$\mbox{det} (\partial z^\alpha/\partial w^\beta) \in \mathbb{R}$, 
$(-g(z))^{1/2}=|\mbox{det} (\partial w^\alpha/\partial z^\beta)| (-g(w))^{1/2}$, 
and 
$(-g(w))^{1/2} dw$ 
$=\mbox{sgn det} (\partial w^\alpha/\partial z^\beta)$ 
$(-g(z))^{1/2} dz$ 
by direct calculations, 
where  $g(w)$ denotes the determinant of 
$\left(g_{\alpha\beta}(w)\right)$ with 
$g_{\alpha\beta}(w):=\langle \partial/\partial w^\alpha, \partial/\partial w^\beta\rangle$,
and we take the square root of $-g$ as $-\pi< \mbox{arg} (-g)\le \pi$.
We have the fundamental results 
\begin{eqnarray*}
g_{\alpha\beta}\partial_\gamma g^{\alpha\beta}&=&-(\partial_\gamma g_{\alpha\beta})g^{\alpha\beta},\\
\partial_\gamma g^{\alpha\beta}&=&-g^{\alpha \mu}(\partial_\gamma g_{\mu\nu})g^{\nu\beta},\\
\partial_\gamma g&=&g g^{\alpha\beta}\partial_\gamma g_{\alpha\beta}
\end{eqnarray*}
by direct calculations.
For any contravariant tensor $T^\alpha$, 
we denote its parallel displacement from $z$ to $z+w$ by 
$\widetilde{T}^\alpha (z+w)
:=T^\alpha(z)-{\Gamma^\alpha}_{\beta\gamma} (z)T^\beta(z) w^\gamma$,
where ${\Gamma^\alpha}_{\beta\gamma}(z)$ 
denotes the proportional constant at $z$.
We assume the symmetry condition 
${\Gamma^\alpha}_{\beta\gamma}={\Gamma^\alpha}_{\gamma\beta}$,  
and 
\[
\left(g_{\alpha\beta}\widetilde{T}^\alpha\widetilde{T}^\beta\right)(z+w)
=
\left(g_{\alpha\beta}{T}^\alpha {T}^\beta\right)(z)
+O\Big(\sum_{0\le \alpha\le n} (w^\alpha)^2\Big)
\] 
for any $T^\alpha$ and $w^\alpha$. 
Then we have the Christoffel symbol
\begin{equation}
\label{Christoffel}
{\Gamma^\alpha}_{\beta\gamma}
=
\frac{1}{2}
g^{\alpha\delta}
\left(
\partial_\beta g_{\delta\gamma}+\partial_\gamma g_{\beta\delta}
-\partial_\delta g_{\beta\gamma}
\right).
\end{equation}
We define the covariant derivative $\nabla_\beta$ 
for $T^\alpha$ by 
\begin{equation*}
\nabla_\beta T^\alpha (z)
:=
\lim_{w^\beta\rightarrow0} 
\frac{T^\alpha(z+\widetilde{w^\beta})-\widetilde{T}^\alpha(z+\widetilde{w^\beta})}{w^\beta}
=\partial_\beta T^\alpha (z)+{\Gamma^\alpha}_{\beta\gamma} (z) T^\gamma (z),
\end{equation*}
where $\widetilde{w^\beta}:=(0,\cdots,0, w^\beta,0,\cdots,0)$ 
whose $\beta$-component is $w^\beta$ 
and the other components are $0$.
We note that 
$\nabla_\gamma g_{\alpha\beta}=0$ 
and 
$\nabla_\gamma g^{\alpha\beta}=0$ follow from 
\eqref{Christoffel}.
In general, we define  
\begin{multline*}
\nabla_\delta {T^{\alpha\beta \cdots} }_{\mu\nu\cdots}
:=
\partial_\delta {T^{\alpha\beta \cdots} }_{\mu\nu\cdots}
+
{\Gamma^\alpha}_{\delta\varepsilon} 
{T^{\varepsilon\beta \cdots} }_{\mu\nu\cdots}
+
{\Gamma^\beta}_{\delta\varepsilon} 
{T^{\alpha\varepsilon \cdots} }_{\mu\nu\cdots}
+\cdots
\\
-
{\Gamma^\varepsilon}_{\delta\mu} 
{T^{\alpha\beta \cdots} }_{\varepsilon\nu\cdots}
-
{\Gamma^\varepsilon}_{\delta\nu} 
{T^{\alpha\beta \cdots} }_{\mu\varepsilon\cdots}
-\cdots
\end{multline*}
for any tensor ${T^{\alpha\beta \cdots} }_{\mu\nu\cdots}$.
By direct calculations, we have 
\begin{eqnarray}
{\Gamma^\beta}_{\alpha\beta} 
&=& 
\partial_\alpha 
\left(
\log (-g)^{1/2}
\right), 
\\
\nabla_\alpha T^\alpha
&=&
\frac{1}{ (-g)^{1/2} } 
\partial_\beta 
\left( 
(-g)^{1/2} T^\beta
\right), 
\label{Eq-Nabla-T}
\\
\nabla_\alpha \nabla^\alpha \psi
&=&
\frac{1}{ (-g)^{1/2} } 
\partial_\beta 
\left( 
(-g)^{1/2} g^{\beta\gamma} \partial_\gamma \psi
\right)
\label{Eq-Div-Scalar}
\end{eqnarray}
for any tensor $T^\alpha$ and any scalar $\psi$.

We define the Riemann curvature tensor 
\[
{R^\delta}_{\alpha\beta\gamma}:=\partial_\beta {\Gamma^\delta}_{\alpha\gamma} 
-
\partial_\gamma {\Gamma^\delta}_{\alpha\beta}
+
{\Gamma^\delta}_{\varepsilon \beta} {\Gamma^\varepsilon}_{\alpha\gamma}
-
{\Gamma^\delta}_{\varepsilon\gamma} {\Gamma^\varepsilon}_{\alpha\beta}
\]
which is derived from ${R^\delta}_{\alpha\beta\gamma}T^\alpha
=(\nabla_\beta\nabla_\gamma-\nabla_\gamma\nabla_\beta)T^\delta$.
We define the Ricci tensor $R_{\alpha\beta}:={R^\gamma}_{\alpha\beta\gamma}$, 
and the scalar curvature $R:=g^{\alpha\beta}R_{\alpha\beta}$.
We define the Einstein tensor by 
$G_{\alpha\beta}:=R_{\alpha\beta}-g_{\alpha\beta} R/2$.
The change of upper and lower indices is done by $g_{\alpha\beta}$ and $g^{\alpha\beta}$, 
for example, 
${G^\alpha}_\beta:=g^{\alpha\gamma}G_{\gamma\beta}$.

Let $\Lambda\in \mathbb{C}$ be a constant, which is called the cosmological constant. 
Let us consider the variation by $g_{\alpha\beta}$ of the Einstein-Hilbert action 
$\int_{\mathcal{M}}  \left(R+2\Lambda\right) $ $(-g)^{1/2} \, dz$.
By the definitions of the Ricci tensor and the covariant derivative, 
and by the symmetry condition  ${\Gamma^\sigma}_{\nu\mu}={\Gamma^\sigma}_{\mu\nu}$, 
we have 
\[
\delta R_{\rho\mu} 
=
\nabla_\mu \left(\delta {\Gamma^\lambda}_{\rho \lambda}\right)
-
\nabla_\lambda \left(\delta {\Gamma^\lambda}_{\rho \mu}\right),
\]
where 
$\delta  {T^{\alpha\beta\cdots} }_{\mu\nu\cdots}$ 
denotes the variation of ${T^{\alpha\beta\cdots} }_{\mu\nu\cdots}$ by $g_{\alpha\beta}$.
Since we have 
$\delta R=(\delta g^{\alpha\beta})R_{\alpha\beta} +g^{\alpha\beta}\delta R_{\alpha\beta}$ 
and 
$
g^{\alpha\beta}\delta R_{\alpha\beta}
=
\nabla_\beta A^\beta$,
where we have put 
$A^\beta:=g^{\alpha\beta}\delta {\Gamma^\lambda}_{\alpha\lambda}
-
g^{\alpha\lambda}\delta {\Gamma^\beta}_{\alpha\lambda}$.
By \eqref{Eq-Nabla-T}, we have 
\[
\delta (R+2\Lambda)
=
(\delta g^{\alpha\beta} )R_{\alpha\beta}
+
\frac{1}{(-g)^{1/2} } \partial_\gamma \left( (-g)^{1/2} A^\gamma \right).
\]
Since we have $\delta (-g)^{1/2}
=
-(-g)^{1/2}g_{\alpha\beta} \left( \delta g^{\alpha\beta}\right)/2$, 
we obtain 
\[
\delta \int_{\mathcal{M}}  (R+2\Lambda) (-g)^{1/2} dz
=
\int_{\mathcal{M} }
\left(
G_{\alpha\beta}-\Lambda g_{\alpha\beta}
\right) 
(-g)^{1/2} \delta g^{\alpha\beta} dz
+
\int_{\mathcal{M}} \partial_\gamma 
\left(
(-g)^{1/2} A^\gamma
\right)
dz.
\]
Since the second term in the right hand side vanishes by the divergence theorem,
the Euler-Lagrange equation for the Einstein-Hilbert action 
is given by 
$G_{\alpha\beta}-\Lambda g_{\alpha\beta}=0$.

For a stress-energy tensor 
${T^\alpha}_\beta$,
we define the $(1+n)$-dimensional Einstein equation 
\begin{equation}
\label{Einstein}
{G^\alpha}_{\beta}-\Lambda {g^\alpha}_\beta
= \kappa \, \, {T^\alpha}_\beta, 
\end{equation}
where $\kappa$ is a constant and we assume that 
$\kappa c^4$ is independent of $c$.
For the case $n=3$ and real line elements, 
the constant $\kappa$ is called 
the Einstein gravitational constant which is given by 
$\kappa=8\pi \mathcal{G}/c^4$, 
where $\mathcal{G}$ is the Newton gravitational constant.
For the case $n\ge 3$ and complex line elements, 
we are able to generalize the constant $\kappa$ to 
\begin{equation}
\label{Kappa}
\kappa:=\frac{2(n-1)\pi^{n/2} \mathcal{G}}{(n-2)\Gamma(n/2) c^{4} },
\end{equation}
where $\Gamma$ denotes the gamma function.
%%%%%%%%%%%%%%%%%%%%%%%%
%  Derivation Kappa
%%%%%%%%%%%%%%%%%%%%%%%% 
Let us show the derivation of \eqref{Kappa}.
We denote the volume of the unit ball in $\mathbb{R}^n$ by 
$\Omega_n:=2\pi^{n/2}/n\Gamma(n/2)$.
We put $\hat{z}:=(z^1,\cdots,z^n)$, 
$r(\hat{z}):=\left\{\sum_{j=1}^n (z^j)^2\right\}^{1/2}$,
and $\omega^1=\cdots=\omega^n$ in \eqref{z-x}.
We define a function $E(\hat{z})$ by  
\[
E(\hat{z}):=
\left\{
\begin{array}{ll}
\displaystyle 
\frac{1}{(2-n)n\Omega_n} 
r(\hat{z})^{2-n} & \mbox{if}\ n\ge3, 
\\
\displaystyle 
\frac{1}{n\Omega_n} 
\log r(\hat{z}) & \mbox{if}\ n=2, 
\\
\displaystyle 
\frac{1}{n\Omega_n} 
r(\hat{z}) & \mbox{if}\ n=1.
\end{array}
\right.
\]
Since $E(\hat{x})$ for $\hat{x}=(x^1,\cdots,x^n)\in \mathbb{R}^n$ is the fundamental solution of the Laplacian, 
the function $E(\hat{z})$ satisfies 
\beq
\label{E-Delta}
\Delta_{\hat{z}} E(\hat{z})=\delta(\hat{z}),
\eeq 
where 
$\Delta_{\hat{z}}:=\sum_{j=1}^n \partial^2 /(\partial z^j)^2$ 
and 
$\delta$ denotes the Dirac $\delta$-function.
We assume that $(g_{\alpha\beta})$ is sufficiently close to the Minkowski matrix $(\eta_{\alpha\beta}):=\mbox{diag}(-c^2,1,\cdots,1)$.
Namely, we put $h_{\alpha\beta}:=g_{\alpha\beta}-\eta_{\alpha\beta}$, 
and we assume that $|h_{\alpha\beta}|$ is sufficiently small.
For a potential $\phi=\phi(\hat{z})$ and the Lagrangian 
$L(\hat{z}, d\hat{z}/d\tau):=\sum_{j=1}^n (dz^j/d\tau)^2/2-\phi(\hat{z})$,
the Euler-Lagrange equation for the action 
$\int L(\hat{z}, d\hat{z}/d\tau) d\tau$
is given by 
\begin{equation}
\label{Eq-Newton}
\frac{d^2 \hat{z}}{d\tau^2}+\nabla_{\hat{z}} \phi=0, 
\end{equation}
where we have put $\nabla_{\hat{z}}:=(\partial_1,\cdots,\partial_n)$.
We regard this equation as the equation of motion in our setting.
Since a natural extension of the Newton equation for a particle at $\hat{z}$ 
in the gravitational field by $K$-particles 
with mass $m(k)$ at $\hat{z}(k)$ for $1\le k\le K$ has the form 
\[
\frac{d^2 \hat{z} }{d\tau^2}
=
-
\sum_{k=1}^K \mathcal{G} \cdot \frac{m(k)}{ r( \hat{z}-\hat{z}(k) )^{n-1} }
\cdot
\frac{ \hat{z}-\hat{z}(k) }{ r(\hat{z}-\hat{z}(k)) },
\]
we formulate the Newton equation \eqref{Eq-Newton} by 
$\phi:=n\Omega_n \mathcal{G} \rho\ast_{\hat{z}} E$,
where $\rho=\rho(\hat{z})$ denotes the density of mass.
We note that 
\begin{equation}
\label{DeltaPhi-1}
\Delta_{\hat{z}} \phi=n\Omega_n \mathcal{G}\rho
\end{equation}
holds by \eqref{E-Delta}.
The Euler-Lagrange equation for the action
\[
\int \left( -g_{\alpha\beta} \frac{dz^\alpha}{d\tau} 
\frac{dz^\beta}{ d\tau }\right)^{1/2} d\tau
\]
yields the equation of the geodesic curve as  
\begin{equation}
\label{Geodesic}
\frac{d^2 z^\delta}{d\tau^2}+{\Gamma^\delta}_{\alpha\beta} 
\frac{d z^\alpha}{ d\tau } \frac{ dz^\beta }{ d\tau }=0.
\end{equation}
Let us assume 
$\partial_0 g_{\alpha\beta}\doteqdot 0$,
$g^{0k}\partial_k g_{00}\doteqdot 0$, 
$h^{\alpha\beta} \partial_\gamma h_{\delta\varepsilon}\doteqdot 0$
and 
$dz^j/dz^0\doteqdot0$.
Then we have 
${\Gamma^\lambda}_{00}\doteqdot -g^{\lambda k}\partial_k g_{00}/2$,
${\Gamma^0}_{00}\doteqdot 0$, 
${\Gamma^j}_{00}\doteqdot - \partial_j g_{00}/2$ 
and 
$d\tau\doteqdot dz^0$ 
by the definitions of ${\Gamma^\alpha}_{\beta\gamma}$ and $d\tau$.
So that, we have 
$d^2 z^j/(dz^0)^2 +{\Gamma^j}_{00} \doteqdot 0$ which yields 
\begin{equation}
\label{Eq-Phi-g00}
\partial_j \left( \phi+\frac{g_{00}}{2} \right)\doteqdot 0
\end{equation}
for $1\le j\le n$ by \eqref{Eq-Newton} and \eqref{Geodesic}.
Let us consider the case $\Lambda=0$ in \eqref{Einstein}.
We have 
\beq
\label{Kappa-RT}
(n-1)R=-2\kappa T
\eeq
and 
\begin{equation}
\label{Einstein-R}
(n-1)R_{\alpha\beta}=\kappa \left( (n-1)T_{\alpha\beta} - T g_{\alpha\beta} \right)
\end{equation}
when $n\ge1$ by \eqref{Einstein}.
Under the assumption 
$\partial_\alpha h_{\beta\gamma} \partial_\delta h_{\varepsilon \zeta}\doteqdot0$, 
we have 
\begin{equation}
\label{Weak-R00}
R_{00}
\doteqdot
-\partial_j {\Gamma^j}_{00}
\doteqdot
\frac{1}{2} \Delta_{\hat{z}} g_{00}\doteqdot-\Delta_{\hat{z}} \phi=-n\Omega_n \mathcal{G}\rho,
\end{equation}
where we have used the definition of Ricci tensor, 
the above fact ${\Gamma^j}_{00}\doteqdot-\partial_j g_{00}/2$, 
\eqref{Eq-Phi-g00} 
and 
\eqref{DeltaPhi-1}. 
We now consider the stress-energy tensor $T^{\alpha\beta}$ given by   
\[
T^{\alpha\beta}:=-\rho \, 
\frac{\partial z^\alpha}{\partial \tau} \, 
\frac{\partial z^\beta}{\partial \tau}
\] 
based on the analogy to the stress tensor of the perfect gas.
We have 
\beq
\label{Weak-T-0}
T^{\alpha\beta}\doteqdot 
\left\{
\begin{array}{ll}
-\rho & \mbox{if} \ (\alpha,\beta)=(0,0), \\
0 & \mbox{if} \ (\alpha,\beta)\neq(0,0)
\end{array}
\right.
\eeq
by $\partial z^j/\partial z^0\doteqdot 0$.
So that, we have 
\beq
\label{T-Rho}
T\doteqdot -\rho g_{00}
\eeq
and 
\begin{equation}
\label{Weak-T}
T_{\alpha\beta}\doteqdot 
\left\{
\begin{array}{ll}
-\rho(g_{00})^2 & \mbox{if} \ (\alpha,\beta)=(0,0), \\
0 & \mbox{if} \ (\alpha,\beta)\neq(0,0).
\end{array}
\right.
\end{equation}
Therefore, we obtain 
\beq
\label{Kappa-Rho}
n(n-1)\Omega_n \mathcal{G}\rho \doteqdot (n-2) \kappa \rho (g_{00})^2
\eeq
by \eqref{Einstein-R} and \eqref{Weak-R00}.
The required result \eqref{Kappa} holds when $n\ge3$ 
by $g_{00}\doteqdot -c^2$ and \eqref{Kappa-Rho}.
When $n=2$, we have 
$T^{\alpha\beta}\doteqdot0$ since $\rho=0$ by \eqref{Kappa-Rho}.
When $n=1$, we have 
$\kappa T^{\alpha\beta}\doteqdot0$ 
since we have $\kappa=0$ or $\rho=0$ by \eqref{Kappa-Rho}.

%%%%%%%%%%%%%%%%%%%%%%%%%%%%%%%%%%%%%
%
%%%%%%%%%%%%%%%%%%%%%%%%%%%%%%%%%%%%%

\newsection{Uniform and isotropic spaces}
\label{Section-Uniform}
We put $r:=\left(\sum_{j=1}^n (z^j)^2\right)^{1/2}$. 
We assume that 
the space is uniform and isotropic,
and we consider the line element  
\begin{equation}
\label{Line-Element}
g_{\alpha\beta}dz^\alpha dz^\beta:=-c^2(dz^0)^2+e^{h(z^0)} e^{f(r)} \sum_{j=1}^n (dz^j)^2, 
\end{equation}
where $h$ and $f$ are complex-valued functions.
This line element is uniform in the sense that 
for any two points $P$ and $Q$ in $\mathbb{C}^n$, the ratio of the coefficients 
$e^{h(z^0)} e^{f(r(P))} /e^{h(z^0)} e^{f(r(Q))}$ is independent of $z^0$.

By direct calculations, we have ${G^0}_j={G^j}_0=0$, 
\[
{G^0}_0:=\frac{n-1}{2c^2} 
\left\{
\frac{n}{4}(\partial_0 h)^2-c^2 e^{-h-f}
\left( f''+(n-1)\frac{f'}{r}+\frac{n-2}{4} (f')^2 \right) 
\right\},
\]
and 
\begin{multline*}
{G^j}_k:={g^j}_k 
\left\{
\frac{n-1}{2c^2}
\left(
\partial_0^2 h+\frac{n}{4}(\partial_0 h)^2
\right)
\right.
\\
\left.
-\frac{n-2}{2}e^{-h-f}
\left(
f''+(n-2)\frac{f'}{r}+\frac{n-3}{4} (f')^2 
\right)
\right\}
\\
+
\frac{n-2}{2}
e^{-h-f} 
\left(
f''-\frac{f'}{r}-\frac{(f')^2}{2}
\right)
\frac{z^jz^k}{r^2},
\end{multline*}
where $f':=d f/d r$.
Since the space is isotropic,  the coefficient of $z^j z^k$ must vanish.
So that, we assume that $f$ satisfies $f''-f'/r-(f')^2/2=0$, by which we obtain 
\begin{equation}
\label{f}
e^f=q^2\left(1+\frac{k^2 r^2}{4} \right)^{-2}
\end{equation}
for constants $q(\neq0), k \in \mathbb{C}$.
%%%%%%%%%%%%%%%%%%%%%%%%
% a
%%%%%%%%%%%%%%%%%%%%%%%%
We define a function 
\begin{equation}
\label{a-h}
a(z^0):=e^{h(z^0)/2}.
\end{equation}
Let us consider the stress-energy tensor ${T^\alpha}_\beta$ of the perfect fluid 
\[
{T^\alpha}_\beta:=\mbox{diag} (\rho c^2,-p,\cdots,-p)
\]
for constant density $\rho$ and pressure $p$.
We put $\widetilde{\rho}:=\rho+\Lambda /\kappa c^2$ and 
$\widetilde{p}:=p-\Lambda/\kappa$.
Then \eqref{Einstein} is rewritten as  
${G^\alpha}_\beta
=
\kappa \cdot \mbox{diag}(\widetilde{\rho}c^2, -\widetilde{p}, \cdots, -\widetilde{p})$, 
which shows that the cosmological constant $\Lambda>0$ is regarded 
as the energy which has positive density and negative pressure in the vacuum 
$\rho=p=0$ for $\kappa>0$ 
(``the dark energy" for $n=3$). 
The equation ${G^0}_{0}=\kappa \widetilde{\rho} c^2{g^0}_0$ is rewritten as  
\begin{equation}
\label{G00}
\frac{n-1}{2}
\left\{
\left(
\frac{\partial_0 a}{c a}
\right)^2
+
\frac{k^2}{q^2a^2}
\right\}
=\frac{\kappa c^2}{n} \cdot \widetilde{\rho}. 
\end{equation}
The equation ${G^j}_{k}=-\kappa \widetilde{p} {g^j}_k$ is rewritten as  
\begin{equation}
\label{Gjk}
\frac{n-1}{2}
\left\{
\frac{2}{n-2} \cdot \frac{\partial_0^2a}{c^2 a}
+
\left(
\frac{\partial_0 a}{c a}
\right)^2
+
\frac{k^2}{q^2a^2}
\right\}
=-\frac{\kappa}{n-2}\cdot  \widetilde{p},
\end{equation}
which is rewritten as the Raychaudhuri equation 
\begin{equation}
\label{Ray}
\frac{\partial_0^2 a}{c^2 a}
=-\frac{n-2}{n-1}\cdot \kappa 
\left(
\frac{\widetilde{\rho}c^2}{n}
+
\frac{\widetilde{p}}{n-2}
\right)
\end{equation}
by \eqref{G00}.
Multiplying $a^n$ to the both sides in \eqref{G00}, 
taking the derivative by $z^0$ variable, 
and using \eqref{Gjk}, 
we have the conservation of mass  
\begin{equation}
\label{Conservation-Mass}
\partial_0(\widetilde{\rho} c^2 a^n)+\widetilde{p} \partial_0 a^n=0.
\end{equation} 
For any number $\sigma$, we assume the equation of state 
\begin{equation}
\label{Eq-State}
\widetilde{p}=\sigma \widetilde{\rho} c^2.
\end{equation}
Then $a(z^0)$ must satisfy 
\[
\frac{\partial_0^2 a(z^0)}{c^2 a(z^0)}
=
-\frac{n-2+n\sigma}{n(n-1)} \cdot \kappa \widetilde{\rho} c^2 
\]
with 
\beq
\label{Tilrho}
\widetilde{\rho}=\frac{n-1}{2}\cdot \frac{n}{\kappa c^4}
\cdot \frac{\partial_0 a(0)^2}{a(0)^{2-n(1+\sigma)} } 
\cdot a(z^0)^{-n(1+\sigma)}
\eeq
by \eqref{Ray} and \eqref{Conservation-Mass}, which has the solution 
\begin{equation}
\label{a}
a(z^0):= 
\left\{
\begin{array}{ll}
a(0)\left( 1+\frac{n(1+\sigma)\partial_0a(0)z^0}{2a(0)} \right)^{2/n(1+\sigma)}
& 
\mbox{if}\ \ \sigma \neq -1, 
\\
a(0)\exp \left( \frac{\partial_0 a(0) z^0}{a(0)} \right) & 
\mbox{if}\ \ \sigma = -1.
\end{array}
\right.
\end{equation}
By the above argument, we have derived the line element 
\begin{equation}
\label{Line-Element-a}
g_{\alpha\beta}dz^\alpha dz^\beta
=
-c^2(dz^0)^2+
a(z^0)^2q^2\left(1+\frac{k^2 r^2}{4} \right)^{-2}\sum_{j=1}^n (dz^j)^2
\end{equation}
for constants $q(\neq 0), k \in \mathbb{C}$ 
as the solution of \eqref{Einstein}.
By \eqref{G00}, \eqref{Tilrho} and \eqref{a}, we have $k=0$.

%%%%%%%%%%%%%%%%%%%%%%%%
% Remark
%%%%%%%%%%%%%%%%%%%%%%%%

\begin{remark}
The line element \eqref{Line-Element-a} is known as the Robertson-Walker metric 
for the case that $a(>0)$ is real-valued,  $z\in \mathbb{R}^{1+n}$, $n=3$, $q=1$ and $k^2=0, \pm 1$. 
Here, $k^2$ denotes the curvature of the space.
In this case, $a$ in \eqref{a} blows up in finite time 
when $\partial_0a(0)>0$ and $\sigma<-1$, 
which is called Big-Rip in cosmology. 
The case $\sigma= -1$ shows the exponential expansion 
of $a$ when $\partial_0a(0)>0$.
The case $\sigma> -1$ shows the polynomial expansion 
of $a$ when $\partial_0a(0)>0$.
These models are studied for the accelerating expansion  
of the universe.
The line element \eqref{Line-Element-a} with \eqref{a} is a natural extension of these models for general dimensions and complex line elements.
\end{remark}

%%%%%%%%%%%%%%%%%%%%%%%
% Field Equation
%%%%%%%%%%%%%%%%%%%%%%%

\newsection{A field equation and the nonrelativistic limit}
\label{Section-Field}
For any $\lambda  \in \mathbb{C}$ and any complex-valued $C^2$ function $\phi$ on $\mathcal{M}$, 
we define the Lagrangian 
\[
L(\phi):=-\frac{1}{2} g^{\alpha\beta} \partial_\alpha \phi \, \partial_\beta \phi
-
\frac{1}{2}\left(\frac{mc}{\hbar}\right)^2\phi^2
+
\frac{\lambda  }{p+1}|\phi|^{p-1}\phi^2.
\]
We apply the variational method to the action  
$\int_{ \mathcal{M} } L(\phi) (-g)^{1/2} dz$ for $\phi$. 
Then the Euler-Lagrange equation is given by 
\begin{equation}
\label{Eq-Field}
\frac{1}{ (-g)^{1/2} } \partial_\alpha 
((-g)^{1/2} g^{\alpha\beta} \partial_\beta \phi)
-
\left(\frac{mc}{\hbar}\right)^2\phi 
+
\lambda   |\phi|^{p-1}\phi 
=0
\end{equation}
under the constraint condition $\mbox{arg}\, \delta \phi= \mbox{arg}\, \phi$.
This is the equation of motion of massive scalar field described by a function $\phi$ with the mass $m$ and the potential $\lambda   |\phi|^{p-1}\phi^2/(p+1)$.

We put $q=1$ and $k=0$.
Then the line element \eqref{Line-Element-a} is rewritten as 
$-c^2(dz^0)^2+a(z^0)^2 \sum_{1\le j\le n} (dz^j)^2$.
Then the field equation \eqref{Eq-Field} is rewritten as 
\begin{equation}
\label{Eq-Field-2}
-\frac{1}{c^2}\left(
\partial_0^2+\frac{n\partial_0 a}{a}\partial_0+\frac{m^2c^4}{\hbar^2}
\right)\phi
+
\frac{1}{a^{2}} \Delta_z \phi
+
\lambda   |\phi|^{p-1}\phi
=0.
\end{equation}
For any constant $b_0\in \mathbb{C}$, we define a weight function $w(z^0)$ and a function $b(z^0)$ by 
\[
w(z^0):=b_0 \left( \frac{a(0)}{a(z^0)} \right)^{n/2},
\ \ \ \ 
b(z^0):= w(z^0) \exp\left(\mp i \frac{m}{\hbar}c^2 z^0 \right),
\]
where we note $b(0)=b_0$.
We transform $\phi$ to $u$ by the equation 
\[
\phi(z^0,\cdots,z^n)=u(z^0,\cdots,z^n) b(z^0).
\]
We assume $mz^0/\hbar\in \mathbb{R}$.
Then the nonrelativistic limit ($c\rightarrow\infty$) of \eqref{Eq-Field-2} yields 
\begin{equation}
\label{Eq-Field-2-NR}
\pm i\frac{2m}{\hbar} \partial_0 u+\frac{1}{a^2} \Delta_z u
+
\lambda  |uw|^{p-1}u=0,
\end{equation}
where we have used the gauge invariance of $\lambda |\phi|^{p-1}\phi$.

%%%%%%%%%%%%%%%%%%%%%%%%%%%%%%%%%%%%%%%%
%
%%%%%%%%%%%%%%%%%%%%%%%%%%%%%%%%%%%%%%%%

\newsection{A unified derivation of several PDEs}
\label{Section-PDE}
Let us consider the equations \eqref{Eq-Field-2} and \eqref{Eq-Field-2-NR} 
under the transform \eqref{z-x} with 
$\omega^1=\cdots=\omega^n$.
We put $t=x^0$ and 
\beq
\label{Def-*}
\begin{array}{l}
\phi_\ast(t,x^1,\cdots,x^n):=\phi(z^0, \cdots,z^n), \\
u_\ast(t,x^1,\cdots,x^n):=u(z^0, \cdots,z^n), \\
a_\ast(t):= a(z^0),\ \ w_\ast(t):=w(z^0).
\end{array}
\eeq
We put $\theta:=\mbox{arg} \,\left( a_\ast(t)\right)$.
Then \eqref{Eq-Field-2} is rewritten as 
\begin{multline}
\label{Eq-Field-2-V-*}
-\frac{1}{c^2}
\frac{e^{2i(\theta+\omega^1)} }{e^{2i\omega^0} }
\left(
\partial_t^2+\frac{n\partial_t a_\ast}{a_\ast}\partial_t+\left(\frac{mc^2e^{i\omega^0}}{\hbar}\right)^2
\right)\phi_\ast
+
\frac{1}{|a_\ast|^{2}} \Delta_x \phi_\ast
\\
+
e^{2i(\theta+\omega^1)}\lambda |\phi_\ast|^{p-1}\phi_\ast
=0, 
\end{multline}
and  \eqref{Eq-Field-2-NR} is rewritten as 
\begin{equation}
\label{Eq-Field-2-NR-V-*}
\pm i\frac{2me^{i\omega^0}}{\hbar} 
\partial_t u_\ast
+
\frac{e^{2i\omega^0} }{e^{2i(\theta+\omega^1)} }
\left(
\frac{1}{|a_\ast|^2} \Delta_x u_\ast
+
e^{2i(\theta+\omega^1)} \lambda |u_\ast w_\ast|^{p-1}u_\ast
\right)
=0.
\end{equation}
The equation \eqref{Eq-Field-2-V-*} 
and its nonrelativistic limit \eqref{Eq-Field-2-NR-V-*} give 
a unified derivation of the elliptic equation, 
the Klein-Gordon equation, the Schr\"odinger equation, 
and the parabolic equation as follows.
For simplicity, we consider the simplest case $a=1$ 
which follows from $a(0)=1$ and $\partial_0a(0)=0$ in 
\eqref{a}.
So that,  we have $\theta=0$ and $a_\ast =1$.
We put $b_0=1$, which yields $w_\ast=1$.
When $\omega^0=\cdots=\omega^n=0$, 
\eqref{Eq-Field-2-V-*} and \eqref{Eq-Field-2-NR-V-*}
are rewritten as 
the Klein-Gordon equation 
\begin{equation}
\label{Eq-Field-2-Rewritten}
\partial_t^2 \phi_\ast
+\frac{m^2c^4}{\hbar^2} \phi_\ast
-c^2
\Delta_x \phi_\ast
-c^2
\lambda   |\phi_\ast|^{p-1}\phi_\ast
=0,
\end{equation}
and the Schr\"odinger equation  
\begin{equation}
\label{Eq-Field-2-NR-Rewrriten-Schrodinger}
\pm i\frac{2m}{\hbar} \partial_t u_\ast+ \Delta_x u_\ast
+
\lambda  |u_\ast|^{p-1}u_\ast=0,
\end{equation}
respectively.
When $\omega^0=0$ and $\omega^1=\cdots=\omega^n=\pi/2$, 
\eqref{Eq-Field-2-V-*} is rewritten as 
the elliptic equation 
\begin{equation}
\label{Eq-Field-2-Rewritten}
\partial_t^2 \phi_\ast
+\frac{m^2c^4}{\hbar^2} \phi_\ast
+
c^2\Delta_x \phi_\ast
-
c^2\lambda   |\phi_\ast|^{p-1}\phi_\ast
=0.
\end{equation}
When $\omega^0=0$ and $\omega^1=\cdots=\omega^n=\pi/4$, 
\eqref{Eq-Field-2-NR-V-*} with the positive sign is rewritten as 
the parabolic equation 
\begin{equation}
\label{Eq-Field-2-NR-Rewrriten-Parabolic}
\frac{2m}{\hbar} \partial_t u_\ast- \Delta_x u_\ast
-
i\lambda  |u_\ast|^{p-1}u_\ast=0.
\end{equation}

We note that \eqref{Eq-Field-2-V-*} is rewritten as 
\begin{equation}
\label{Eq-Field-2-Rewritten-a}
\partial_t^2 \phi_\ast
+nH\partial_t \phi_\ast
+\frac{m^2c^4}{\hbar^2} \phi_\ast
-\frac{c^2}{e^{2Ht} }
\Delta_x \phi_\ast
-c^2
\lambda   |\phi_\ast|^{p-1}\phi_\ast
=0
\end{equation}
when $\omega^0=\cdots=\omega^n=0$ and $a_\ast(t)=e^{Ht}$ (de Sitter spacetime) with the Hubble constant $H\in \mathbb{R}$.
So that, we know that the spatial expansion $H>0$ yields the effect of dissipation, 
while the spatial contraction $H<0$ yields the effect of antidissipation 
(see e.g., \cite{Nakamura-2014-JMAA} for the Klein-Gordon equation in de Sitter spacetime).
We also note that the complex Ginzburg-Landau equation 
\begin{equation}
\label{Eq-CGL}
\partial_t u_\ast
-
\gamma\Delta u_\ast
-
\lambda_1  u_\ast
+
\lambda_2  |u_\ast|^2u_\ast=0,  
\end{equation}
where $\gamma\in \mathbb{C}$ with $\mbox{\rm Re}\, \gamma>0$, 
$\lambda_1\ge0$,  
$\lambda_2\in \mathbb{C}$ with $\mbox{\rm Re}\, \lambda_2>0$, 
is considered as the sum of the potentials with  
$p=1$ and $p=3$ in \eqref{Eq-Field-2-NR-V-*} 
when $\gamma=\pm i\hbar/2me^{2i\omega^1}$, $\theta=\omega^0=0$ and $a_\ast=1$.

%%%%%%%%%%%%%%%%%%%%%%%%%%%%%%%%%%%%%%
%
%%%%%%%%%%%%%%%%%%%%%%%%%%%%%%%%%%%%%

\newsection{Energy estimates}
\label{Section-Energy}
We have derived the equations \eqref{Eq-Field-2} and \eqref{Eq-Field-2-NR}.
The properties of the equations depend on the coordinate $(t,x)$ defined by $t:=x^0$ and \eqref{z-x}.
For example, we will see that the equations have dissipative or antidissipative properties 
on energy estimates dependently on $\omega^0,\cdots,\omega^n$.
We assume that there exist two functions $V_0$ and $V_0'$ on $\mathbb{C}$ which satisfy  
\begin{equation}
\label{Assumption-V0dt}
\partial_t V_0(\psi)=\mbox{\rm Re}\, \left( \partial_t \overline{\psi} \, V_0'(\psi) \right)
\end{equation}
for any function $\psi=\psi(t,x)$.
We put $\omega^1=\cdots=\omega^n$, $t=x^0$ and \eqref{Def-*}.
We assume that $\theta:=\mbox{arg} \,\left( a_\ast(t)\right)$ is a constant.
We put $V':=-e^{-2i(\theta+\omega^1)} V_0'$.
Let us consider energy estimates for the equations 
\begin{equation}
\label{Eq-Field-2-V}
-\frac{1}{c^2}\left(
\partial_0^2+\frac{n\partial_0 a}{a}\partial_0+\frac{m^2c^4}{\hbar^2}
\right)\phi
+
\frac{1}{a^{2}} \Delta_z \phi
+
V'(\phi)
=0
\end{equation}
and 
\begin{equation}
\label{Eq-Field-2-NR-V}
\pm i\frac{2m}{\hbar} \partial_0 u+\frac{1}{a^2} \Delta_z u
+
\frac{1}{w}V'(uw)=0,
\end{equation}
which are extensions of \eqref{Eq-Field-2} and \eqref{Eq-Field-2-NR} for general nonlinear terms $V'$.
When $V'(\phi)=\lambda  |\phi|^{p-1}\phi$, we have \eqref{Eq-Field-2} and \eqref{Eq-Field-2-NR}.
For example, $V_0(\psi):=\lambda_0  |\psi|^{p+1}/(p+1)$ and $V_0'(\psi):=\lambda_0  |\psi|^{p-1}\psi$ for  
$\lambda_0 \in \mathbb{R}$ satisfy 
\eqref{Assumption-V0dt}.
We use \eqref{Def-*}.
The equations \eqref{Eq-Field-2-V} and \eqref{Eq-Field-2-NR-V} are rewritten as  
\begin{multline}
\label{Eq-Field-2-V-*-V}
-\frac{1}{c^2}
\frac{e^{2i(\theta+\omega^1)} }{e^{2i\omega^0} }
\left(
\partial_t^2+\frac{n\partial_t a_\ast}{a_\ast}\partial_t+\left(\frac{mc^2e^{i\omega^0}}{\hbar}\right)^2
\right)\phi_\ast
\\
+
\frac{1}{|a_\ast|^{2}} \Delta_x \phi_\ast
-V_0'(\phi_\ast)
=0, 
\end{multline}
and  
\begin{equation}
\label{Eq-Field-2-NR-V-*-V}
\pm i\frac{2me^{i\omega^0}}{\hbar} 
\partial_t u_\ast
+
\frac{e^{2i\omega^0} }{e^{2i(\theta+\omega^1)} }
\left(
\frac{1}{|a_\ast|^2} \Delta_x u_\ast
-
\frac{1}{w_\ast} V_0'(u_\ast w_\ast)
\right)
=0.
\end{equation}

We put $C_0:=2me^{i\omega^0}/\hbar$.
We have the energy estimate for \eqref{Eq-Field-2-V-*-V} as follows.

\begin{lemma}
\label{Lemma-Conservation-KG}
 (Energy estimates.) 
Let us consider \eqref{Eq-Field-2-V-*-V}.
Assume $C_0\in \mathbb{R}$, 
$e^{2i(\theta+\omega^1)}/e^{2i\omega^0}$ $\in \mathbb{R}$ and \eqref{Assumption-V0dt}.
Then we have 
\beq
\label{Energy-KG}
\int_{\mathbb{R}^n} e^0(t,x) dx +\int_0^t \int_{\mathbb{R}^n} e^{n+1}(s,x) dx ds 
=
\int_{\mathbb{R}^n} e^0(0,x) dx,
\eeq
where we have put 
\[
e^0:=
\frac{ e^{ 2i(\theta+\omega^1) } }{ c^2e^{2i\omega^0} }
\left(
|\partial_t \phi_\ast|^2
+
\frac{C_0^2c^4}{4}|\phi_\ast|^2
\right)
+
\frac{1}{ |a_\ast|^2 } 
\sum_{j=1}^n |\partial_{x^j} \phi_\ast|^2
+
2V_0(\phi_\ast)
\]
and 
\[
e^{n+1}:=
\frac{ e^{ 2i(\theta+\omega^1)} }{ c^2e^{2i\omega^0} }
 2 \mbox{\rm Re}\,  
\left(
\frac{n\partial_t a_\ast}{a_\ast} 
\right)
|\partial_t \phi_\ast|^2
-
\partial_t 
\left(
\frac{1}{|a_\ast|^2}
\right)
\sum_{j=1}^n |\partial_{x^j} \phi_\ast|^2.
\]
\end{lemma}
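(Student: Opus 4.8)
The plan is to derive the identity \eqref{Energy-KG} by the standard energy method: multiply the equation \eqref{Eq-Field-2-V-*-V} by $\partial_t\overline{\phi_\ast}$, take the real part, and integrate over $\mathbb{R}^n$. Under the hypotheses $C_0\in\mathbb{R}$ and $e^{2i(\theta+\omega^1)}/e^{2i\omega^0}\in\mathbb{R}$, the coefficient $e^{2i(\theta+\omega^1)}/(c^2e^{2i\omega^0})$ is real, so the first-order-in-time term $\partial_t^2\phi_\ast + (n\partial_t a_\ast/a_\ast)\partial_t\phi_\ast + (mc^2e^{i\omega^0}/\hbar)^2\phi_\ast$ when paired with $\partial_t\overline{\phi_\ast}$ and realified produces $\tfrac12\partial_t|\partial_t\phi_\ast|^2$, a damping term $2\,\mathrm{Re}\,(n\partial_t a_\ast/a_\ast)|\partial_t\phi_\ast|^2$, and $\tfrac12\partial_t(C_0^2c^4/4\,|\phi_\ast|^2)$, using $(mc^2e^{i\omega^0}/\hbar)^2 = C_0^2c^4/4$. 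The Laplacian term, after integration by parts in $x$ against $\partial_t\overline{\phi_\ast}$ and taking the real part, yields $-\tfrac12\partial_t\bigl(|a_\ast|^{-2}\sum_j|\partial_{x^j}\phi_\ast|^2\bigr)$ together with the extra term $-\tfrac12\partial_t(|a_\ast|^{-2})\sum_j|\partial_{x^j}\phi_\ast|^2$ coming from differentiating the time-dependent coefficient. Finally, the nonlinear term $-V_0'(\phi_\ast)$ paired with $\partial_t\overline{\phi_\ast}$ and realified is exactly $-\partial_t V_0(\phi_\ast)$ by assumption \eqref{Assumption-V0dt}.

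Collecting these, I would obtain the pointwise (in $t$, integrated in $x$) identity
\[
\frac{d}{dt}\int_{\mathbb{R}^n} e^0(t,x)\,dx + \int_{\mathbb{R}^n} e^{n+1}(t,x)\,dx = 0,
\]
after multiplying through by the convenient factor of $2$ absorbed into the definitions of $e^0$ and $e^{n+1}$. Here one must be careful: the definition of $e^{n+1}$ has $-\partial_t(|a_\ast|^{-2})$ whereas the naive computation gives $-\tfrac12\partial_t(|a_\ast|^{-2})$ from the coefficient derivative plus another $-\tfrac12\partial_t(|a_\ast|^{-2})$ hidden inside rewriting $\partial_t(|a_\ast|^{-2}\sum|\partial_{x^j}\phi_\ast|^2)$; tracking the factor of $2$ consistently between the $e^0$ normalization and the $e^{n+1}$ normalization is the bookkeeping one has to get right. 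Integrating this identity in $t$ from $0$ to $t$ gives \eqref{Energy-KG}.

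The routine-but-essential technical points are: (i) justifying the integration by parts in $x$, i.e. that boundary terms at spatial infinity vanish — this is implicit in the function class (say Schwartz or $H^1$ with decay) and I would simply assume sufficient regularity and decay of $\phi_\ast$; (ii) the reality of the coefficients, which is precisely where the hypotheses $C_0\in\mathbb{R}$ and $e^{2i(\theta+\omega^1)}/e^{2i\omega^0}\in\mathbb{R}$ enter — without them the ``energy'' $e^0$ would be complex and the manipulation $2\,\mathrm{Re}(z\partial_t\overline{\phi_\ast}\partial_t\phi_\ast) = 2|z|\cdot(\ldots)$ would fail; and (iii) the identity \eqref{Assumption-V0dt} applied with $\psi=\phi_\ast$, which converts the nonlinear contribution into a perfect time derivative. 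Note also that since $\theta$ is assumed constant, $\partial_t(|a_\ast|^{-2})$ and $\partial_t a_\ast/a_\ast$ are the only places the time-variance of the scale function enters, and $a_\ast = |a_\ast|e^{i\theta}$ lets one write $\mathrm{Re}(\partial_t a_\ast/a_\ast) = \partial_t|a_\ast|/|a_\ast|$.

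The main obstacle is not conceptual but organizational: keeping the several real/imaginary-part reductions and the factor-of-two normalizations in the definitions of $e^0$ and $e^{n+1}$ perfectly consistent, and confirming that the sign of the dissipative term $e^{n+1}$ comes out as stated (so that, as advertised in Section \ref{Section-PDE}, spatial expansion $\mathrm{Re}(\partial_t a_\ast/a_\ast)>0$ corresponds to genuine energy dissipation). I expect no difficulty beyond careful algebra once the multiplier $\partial_t\overline{\phi_\ast}$ is chosen and the three hypotheses are invoked at the three points indicated above. \ebox
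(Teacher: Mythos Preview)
Your approach is correct and essentially identical to the paper's: multiply \eqref{Eq-Field-2-V-*-V} by $\partial_t\overline{\phi_\ast}$, take the real part (using $C_0\in\mathbb{R}$, $e^{2i(\theta+\omega^1)}/e^{2i\omega^0}\in\mathbb{R}$, and \eqref{Assumption-V0dt} at exactly the places you indicate), and integrate. The only cosmetic difference is that the paper records the resulting identity in pointwise divergence form $\partial_t e^0+\sum_j\partial_{x^j}e^j+e^{n+1}=0$ with explicit flux densities $e^j:=-2|a_\ast|^{-2}\mbox{Re}(\partial_t\overline{\phi_\ast}\,\partial_{x^j}\phi_\ast)$ before integrating, whereas you integrate by parts in $x$ directly; your worry about the factor of $2$ in the $\partial_t(|a_\ast|^{-2})$ term is unfounded, since after multiplying by $2\partial_t\overline{\phi_\ast}$ the Laplacian contribution produces exactly $-\partial_t\bigl(|a_\ast|^{-2}\sum_j|\partial_{x^j}\phi_\ast|^2\bigr)+\partial_t(|a_\ast|^{-2})\sum_j|\partial_{x^j}\phi_\ast|^2$ with no leftover half.
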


\begin{proof}
We put 
\[
e^j:=-\frac{1}{ |a_\ast|^2 } 2 \mbox{\rm Re}\, 
\left(
\partial_t \overline{\phi_\ast} \partial_{x^j} \phi_\ast
\right)
\]
for $1\le j\le n$.
Multiplying $\partial_t\overline{\phi_\ast}$ to the both sides in \eqref{Eq-Field-2-V-*-V} 
and taking its real part, we have
\[
\partial_t e^0+\sum_{j=1}^n \partial_{x^j} e^j+e^{n+1}=0,
\]
where we have used $C_0\in \mathbb{R}$, $e^{2i(\theta+\omega^1)}/e^{2i\omega^0}\in \mathbb{R}$ 
and  
\eqref{Assumption-V0dt}.
The required result follows from the integration for $t$ and $x$.
\end{proof}

We have charge and energy estimates for \eqref{Eq-Field-2-NR-V-*-V} as follows.

\begin{lemma}
\label{Lemma-Conservation-CGL}
Let us consider \eqref{Eq-Field-2-NR-V-*-V}.
Assume $C_0\in \mathbb{R}$.
Let $V'_0$ satisfy 
\beq
\label{Assumption-zV0}
\mathrm{Im}\{\bar{z} V_0'(z)\}=0
\eeq
for any $z$ and its complex conjugate $\bar{z}$.

(1) (Charge estimates.)  
We have 
\beq
\label{Energy-C}
\pm \int_{\mathbb{R}^n} e_C^0(t,x) dx +\int_0^t \int_{\mathbb{R}^n} e_C^{n+1}(s,x) dx ds 
=
\pm \int_{\mathbb{R}^n} e_C^0(0,x) dx,
\eeq
where we have put 
\[
e_C^0:=C_0 |u_\ast|^2
\]
and 
\[
e_C^{n+1}:=2\mbox{\rm Im}\, 
\left(
\frac{ e^{2i(\theta+\omega^1)} }{ e^{2i\omega^0} }
\right)
\left(
\frac{1}{ |a_\ast|^2 } \sum_{j=1}^n |\partial_{x^j} u_\ast |^2
+
\frac{1}{ |w_\ast|^2 } \overline{u_\ast w_\ast} V_0'(u_\ast w_\ast)
\right).
\]

(2) (Energy estimates.)  
Assume \eqref{Assumption-V0dt}.
Then we have 
\beq
\label{Energy-E}
\int_{\mathbb{R}^n} e_E^0(t,x) dx +\int_0^t \int_{\mathbb{R}^n} e_E^{n+1}(s,x) dx ds 
=
\int_{\mathbb{R}^n} e_E^0(0,x) dx,
\eeq
where we have put 
\[
e_E^0:=
\sum_{j=1}^n |\partial_{x^j} u_\ast |^2
+
\frac{2|a_\ast|^2}{|w_\ast|^2} V_0(u_\ast w_\ast)
\]
and 
\begin{multline*}
e_E^{n+1}:=
\pm 2C_0\mbox{\rm Im}\, 
\left(
\frac{ e^{2i(\theta+\omega^1)} }{ e^{2i\omega^0} }
\right)
|a_\ast|^2 |\partial_t u_\ast|^2
\\
+\frac{n}{2|w_\ast|^2} \left(\partial_t |a_\ast|^2\right)
\cdot
\left(
\overline{u_\ast w_\ast} V_0'(u_\ast w_\ast)-\frac{2(n+2)}{n} V_0(u_\ast w_\ast)
\right).
\end{multline*}
\end{lemma}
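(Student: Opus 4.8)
\emph{Overall strategy.} For each part the plan is to produce, after multiplying the equation by a suitable factor and integrating over $\mathbb{R}^n$, a single identity of the form $\partial_t\!\int e^0\,dx+\int e^{n+1}\,dx=0$ (with a sign $\pm$ on the first term in part (1)), and then integrate it over $[0,t]$. Throughout, $u_\ast$ is assumed smooth with enough spatial decay that integration by parts in $x$ leaves no boundary contribution. I will use repeatedly that $a_\ast,w_\ast$ depend on $t$ alone and that, since $\theta=\mathrm{arg}(a_\ast)$ is constant (the standing assumption preceding the lemma), the quantities $\partial_t a_\ast/a_\ast=\partial_t|a_\ast|/|a_\ast|$ and $\partial_t w_\ast/w_\ast=-\frac{n}{2}\,\partial_t a_\ast/a_\ast=-\frac{n}{4}\,\partial_t|a_\ast|^2/|a_\ast|^2$ are real-valued, and that $|a_\ast|^2/|w_\ast|^2$ is a constant multiple of $|a_\ast|^{n+2}$, so that $\partial_t\!\big(|a_\ast|^2/|w_\ast|^2\big)=\frac{n+2}{2}\,\partial_t|a_\ast|^2/|w_\ast|^2$.

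\emph{Part (1): charge.} Multiply \eqref{Eq-Field-2-NR-V-*-V} by $\overline{u_\ast}$, integrate over $\mathbb{R}^n$, and integrate the Laplacian term by parts, so that $\int\overline{u_\ast}\Delta_x u_\ast\,dx=-\int|\nabla_x u_\ast|^2\,dx$ is real; rewrite the nonlinear term via $w_\ast^{-1}\overline{u_\ast}=\overline{u_\ast w_\ast}/|w_\ast|^2$. Taking the imaginary part and using $C_0\in\mathbb{R}$ gives $\mathrm{Im}\big(\pm iC_0\!\int\overline{u_\ast}\partial_t u_\ast\,dx\big)=\pm\frac12\partial_t\!\int C_0|u_\ast|^2\,dx$; by \eqref{Assumption-zV0} the quantity $\overline{u_\ast w_\ast}\,V_0'(u_\ast w_\ast)$ is real, so in the two remaining terms only the imaginary part of the scalar factor $e^{2i\omega^0}/e^{2i(\theta+\omega^1)}$ contributes, and $\mathrm{Im}\big(e^{2i\omega^0}/e^{2i(\theta+\omega^1)}\big)=-\mathrm{Im}\big(e^{2i(\theta+\omega^1)}/e^{2i\omega^0}\big)$. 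Multiplying the resulting identity by $2$ yields $\pm\partial_t\!\int e_C^0\,dx+\int e_C^{n+1}\,dx=0$ with $e_C^0,e_C^{n+1}$ as stated, and integrating in $t$ gives \eqref{Energy-C}. (This part uses neither \eqref{Assumption-V0dt} nor the reality of $e^{2i(\theta+\omega^1)}/e^{2i\omega^0}$.)

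\emph{Part (2): energy.} Multiply \eqref{Eq-Field-2-NR-V-*-V} by $\frac{e^{2i(\theta+\omega^1)}}{e^{2i\omega^0}}|a_\ast|^2\,\overline{\partial_t u_\ast}$, so the equation becomes $\pm iC_0\frac{e^{2i(\theta+\omega^1)}}{e^{2i\omega^0}}|a_\ast|^2|\partial_t u_\ast|^2+\Delta_x u_\ast\,\overline{\partial_t u_\ast}-\frac{|a_\ast|^2}{w_\ast}V_0'(u_\ast w_\ast)\,\overline{\partial_t u_\ast}=0$; take the real part and integrate over $\mathbb{R}^n$. Using $C_0\in\mathbb{R}$ and $\mathrm{Re}(iz)=-\mathrm{Im}(z)$, the first term contributes $\mp C_0\,\mathrm{Im}\big(e^{2i(\theta+\omega^1)}/e^{2i\omega^0}\big)|a_\ast|^2|\partial_t u_\ast|^2$; the second, after integration by parts, contributes $-\frac12\partial_t\!\int\sum_j|\partial_{x^j}u_\ast|^2\,dx$. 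For the third, set $v:=u_\ast w_\ast$ and use $w_\ast^{-1}\overline{\partial_t u_\ast}=|w_\ast|^{-2}\big(\overline{\partial_t v}-\overline{v}\,(\partial_t w_\ast/w_\ast)\big)$ together with the reality of $\partial_t w_\ast/w_\ast$; then $\mathrm{Re}\big(\overline{\partial_t v}\,V_0'(v)\big)=\partial_t V_0(v)$ by \eqref{Assumption-V0dt}, while by \eqref{Assumption-zV0} the remaining piece reduces to $-\frac{n}{4}\,\frac{\partial_t|a_\ast|^2}{|w_\ast|^2}\,\overline{v}V_0'(v)$. Finally rewrite $\frac{|a_\ast|^2}{|w_\ast|^2}\partial_t V_0(v)=\partial_t\!\big(\frac{|a_\ast|^2}{|w_\ast|^2}V_0(v)\big)-\frac{n+2}{2}\,\frac{\partial_t|a_\ast|^2}{|w_\ast|^2}V_0(v)$; collecting all terms and multiplying by $-2$ produces exactly $\partial_t\!\int e_E^0\,dx+\int e_E^{n+1}\,dx=0$, and integrating in $t$ gives \eqref{Energy-E}.

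\emph{Main obstacle.} The delicate step is the nonlinear term in part (2): one must separate $\partial_t u_\ast$ from $\partial_t(u_\ast w_\ast)$, observe — thanks to $\theta$ being constant — that $\partial_t w_\ast/w_\ast$ is real so that \eqref{Assumption-zV0} can be invoked to discard the spurious imaginary contributions, and carry along the exact power $|a_\ast|^{n+2}$ concealed in $|a_\ast|^2/|w_\ast|^2$, which is precisely what generates the constant $\frac{2(n+2)}{n}$ in $e_E^{n+1}$. Everything else parallels the proof of Lemma \ref{Lemma-Conservation-KG}: one multiplication, one integration by parts in $x$, and one integration in $t$.
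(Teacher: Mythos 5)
Your proposal is correct and follows essentially the same route as the paper: multiply by $\overline{u_\ast}$ and take the imaginary part for the charge identity, multiply by $e^{2i(\theta+\omega^1-\omega^0)}|a_\ast|^2\,\overline{\partial_t u_\ast}$ and take the real part for the energy identity, then split the nonlinear term via $v=u_\ast w_\ast$ using \eqref{Assumption-V0dt}, \eqref{Assumption-zV0} and the relation $|a_\ast|^2/|w_\ast|^2\propto|a_\ast|^{n+2}$, which produces the constant $2(n+2)/n$. The only cosmetic difference is that you integrate over $\mathbb{R}^n$ first (discarding boundary terms), whereas the paper writes pointwise divergence identities with flux terms $e_C^j$, $e_E^j$ and integrates afterwards; likewise your use of the reality of $\partial_t w_\ast/w_\ast$ is a harmless strengthening of the paper's use of its real part.
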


\begin{proof}
(1) Multiplying $\overline{u_\ast}$ to the both sides in \eqref{Eq-Field-2-NR-V-*-V} 
and taking its imaginary part, we have 
\[
\pm \partial_t e_C^0+\sum_{j=1}^n \partial_{x^j} e_C^j+e_C^{n+1}=0,
\]
where we have used $C_0\in \mathbb{R}$, 
\eqref{Assumption-zV0}, and we have put 
\[ 
e_C^j:=2\mbox{\rm Im}\, 
\left(
\frac{ e^{2i\omega^0} }{ |a_\ast|^2 e^{2i(\theta+\omega^1)} }
\overline{u_\ast} \partial_{x^j} u_\ast
\right)
\]
for $1\le j\le n$.
The required result follows from the integration for $t$ and $x$.

(2) We rewrite \eqref{Eq-Field-2-NR-V-*-V} as 
\[
\pm i C_0 e^{2i(\theta+\omega^1-\omega^0)} |a_\ast|^2 \partial_t u_\ast 
+
\sum_{j=1}^n \partial_{x^j}^2 u_\ast
-
\frac{|a_\ast|^2}{w_\ast} V_0'(u_\ast w_\ast)=0.
\]
Multiplying $\partial_t\overline{u_\ast}$ to the both sides in this equation and taking its real part, we have
\begin{equation}
\label{Proof-Energy-NR-E}
\mp C_0 2\mbox{Im}\, \frac{ e^{2i(\theta+\omega^1)} }{ e^{2i\omega_0} }
|a_\ast|^2|\partial_t u_\ast|^2
-
\sum_{j=1}^n \partial_{x^j} e_E^j
-
\partial_t \sum_{j=1}^n |\partial_{x^j} u_\ast|^2
-
2\mbox{Re}\, I=0,
\end{equation}
where we have put 
$
e_E^j:=-2 \mbox{\rm Re} 
\left(
\partial_t \overline{u_\ast} \partial_{x^j} u_\ast
\right)
$ 
for $1\le j\le n$ and 
\[
I:=\frac{|a_\ast|^2}{w_\ast} \partial_t \overline{u_\ast} V_0'(u_\ast w_\ast).
\] 
Since we have 
\[
I
=
\frac{|a_\ast|^2}{|w_\ast|^2} 
\left(
\partial_t (\overline{u_\ast w_\ast}) V_0'(u_\ast w_\ast)-
\frac{\partial_t \overline{w_\ast} }{\overline{w_\ast} } \overline{u_\ast w_\ast} V_0'(u_\ast w_\ast)
\right)
\]
by 
\[
\partial_t\overline{u_\ast} 
=\frac{1}{ \overline{w_\ast} } 
\left(
\partial_t (\overline{u_\ast w_\ast})
-
\frac{\partial_t \overline{w_\ast} }{\overline{w_\ast} } 
\overline{u_\ast w_\ast}
\right),
\]
we obtain  
\[
2\mbox{Re}\, I
=
\frac{|a_\ast|^2}{|w_\ast|^2} 
\left(
2\partial_t V_0(u_\ast w_\ast)
-
\left(
2\mbox{Re}\, \frac{\partial_t \overline{w_\ast} }{\overline{w_\ast} }
\right)
\overline{u_\ast w_\ast} V_0'(u_\ast w_\ast)
\right)
\]
by \eqref{Assumption-V0dt} and \eqref{Assumption-zV0}.
Since we have 
\begin{eqnarray*}
\partial_t \left(\frac{1}{ |w_\ast|^{2} }\right)&=&-2\mbox{Re}\, \left(\frac{\partial_t \overline{w_\ast}}{\overline{w_\ast}}\right)
\cdot \frac{1}{|w_\ast|^2}, \\
\partial_t \left(\frac{|a_\ast|^2}{|w_\ast|^2}\right)&=&\frac{n+2}{2|w_\ast|^2} \cdot \partial_t 
\left(|a_\ast|^2\right), \\ 
|a_\ast|^2 \partial_t \left(\frac{1}{ |w_\ast|^{2} }\right)&=&\frac{n}{2|w_\ast|^2}\cdot \partial_t 
\left(|a_\ast|^2\right)
\end{eqnarray*}
by the definition of $w_\ast$, 
we obtain 
\begin{multline}
\label{Proof-Energy-NR-E-10}
2\mbox{Re}\, I
=
2\partial_t 
\left(
\frac{|a_\ast|^2}{|w_\ast|^2} 
V_0(u_\ast w_\ast)
\right)
\\
+
\frac{n\partial_t(|a_\ast|^2)}{2|w_\ast|^2}
\left(
\overline{u_\ast w_\ast} V_0'(u_\ast w_\ast)
-\frac{2(n+2)}{n} V_0(u_\ast w_\ast)
\right).
\end{multline}
So that, we have 
\[
\partial_t e_E^0+\sum_{j=1}^n \partial_{x^j} e_E^j +e_E^{n+1}=0 
\]
by \eqref{Proof-Energy-NR-E} and \eqref{Proof-Energy-NR-E-10}. 
We obtain the required result by the integration for $t$ and $x$.
\end{proof}

Let us consider the case $\theta=\omega^0=\omega^1=0$ in Lemmas 
\ref{Lemma-Conservation-KG} and \ref{Lemma-Conservation-CGL}.
Then the equations \eqref{Eq-Field-2-V-*-V} and \eqref{Eq-Field-2-NR-V-*-V} 
are the Klein-Gordon equation and the Schr\"odinger equation, respectively.
Since $\mbox{Im} (e^{2i(\theta+\omega^1)}/e^{2i\omega^0})=0$, 
we have the strict conservation of the energy 
\[
\int_{\mathbb{R}^n} e^0(t,x)dx=\int_{\mathbb{R}^n} e^0(0,x)dx,
\ \ 
\int_{\mathbb{R}^n} e^0_E(t,x)dx=\int_{\mathbb{R}^n} e^0_E(0,x)dx
\]
when $a_\ast$ is a constant, 
while we see in \eqref{Energy-KG} the dissipative and antidissipative properties by the spatial variance 
when $\partial_t a_\ast>0$ and $\partial_t a_\ast<0$ 
since $e^{n+1}>0$ and $e^{n+1}<0$, respectively.
The effects by the spatial variance also appear in \eqref{Energy-C} and \eqref{Energy-E} 
dependently on the structure of the potential $V_0$.
Let us consider the case $\theta=\omega^0=0$ and $\omega^1=\pi/4$ 
in Lemma \ref{Lemma-Conservation-CGL}.
Then the equation \eqref{Eq-Field-2-NR-V-*-V} with the positive sign is the parabolic equation.
Since $\mbox{Im} (e^{2i(\theta+\omega^1)}/e^{2i\omega^0})=1$, 
the dissipative properties always appear in \eqref{Energy-C} and \eqref{Energy-E} when $V_0=0$.
The effects by the spatial variance also appear in the properties 
dependently on the structure of the potential $V_0$.
Consequently, the spatial variance is one of key factors 
which determine the dissipative and antidissipative properties of the equations.

%%%%%%%%%%%%%%%%%%%%%%%
% Vilenkin model
%%%%%%%%%%%%%%%%%%%%%%%

\newsection{Remarks on Vilenkin's model}
\label{Section-Vilenkin}
Let us consider the model of the birth of the universe by Vilenkin  
and generalize it to the case of general dimensions and complex line elements.
We have derived the line element \eqref{Line-Element-a}, which is from 
\eqref{Line-Element} with \eqref{f}, \eqref{a-h} and \eqref{a}.
We put $\widetilde{f}(z):=h(z^0)+f(r)$.
By a direct calculation, we have 
$g=-c^2 e^{n\widetilde{f}}$, 
$\partial_\alpha (-g)^{1/2}=n(-g)^{1/2} \partial_\alpha \widetilde{f}/2$ 
and 
\[
R=-\frac{n}{c^2} \partial_0^2 h
-\frac{n(n+1)}{4c^2} (\partial_0 h)^2
-\frac{n(n-1) k^2}{q^2} e^{-h}.
\]
By these results, we have 
\[
\int_{\mathcal{M}} R (-g)^{1/2} dz
=
\frac{n-1}{2}
\int_{\mathcal{M}} 
\left\{
\frac{n}{2c^2} (\partial_0 h)^2 -\frac{2n k^2}{q^2} e^{-h}
\right\}
(-g)^{1/2} dz,
\]
where we have used the integration by parts for the term $\partial_0^2h$ in $R$.
So that, we have 
\[
\int_{\mathcal{M}} (R+2\Lambda)(-g)^{1/2} dz
=
\frac{(n-1)\kappa c^3 }{2}
\int_{\mathcal{M}} L(a,\partial_0 a)e^{nf/2} dz,
\]
where we have defined $L(a,\partial_0 a)$ by 
\[
L(a,\partial_0 a):=\frac{a^n}{\kappa c^2}
\left\{
\frac{2n}{c^2}\left(\frac{\partial_0 a}{a}\right)^2-\frac{2nk^2}{a^2q^2}
+\frac{4\Lambda}{n-1}
\right\}.
\]
We regard $L(a,\partial_0 a)$ as the Lagrangian for the variation of $a$.
We define the momentum $p:=\partial L/\partial (\partial_0 a)$ 
and the Hamiltonian $H:=p\partial_0 a-L$.
By the definition of $L(a,\partial_0a)$, we have 
\[
p=\frac{4n a^{n-2} \partial_0 a}{\kappa c^4},
\ \ \ \ 
H=\frac{2n a^n}{\kappa c^2}
\left\{
\frac{1}{c^2}\cdot 
\left(
\frac{\partial_0 a}{a}
\right)^2
+
\frac{k^2}{q^2 a^2}
-
\frac{2\Lambda}{n(n-1)} 
\right\},
\]
by which we also have 
\[H=\frac{2n a^n}{\kappa c^2} 
\left(
\frac{c}{4n a^{n-1}}
\right)^2
\left\{
\kappa^2p^2c^4+\frac{V(a)}{c^4}
\right\},
\]
where we have put a potential 
\[
V(a):=
c^2
(4n a^{n-2})^2
\left(
\frac{k^2}{q^2}-\frac{a^2}{\ell^2}
\right)
\] 
and $\ell :=(n(n-1)/2\Lambda)^{1/2}$.
So that, the solution of the equation of motion $H=0$ is given by  
the scale function
\[
a(z^0)=
\left\{
\begin{array}{ll}
\pm\frac{k\ell}{q} \cosh\left(\frac{cz^0}{\ell}+C\right) & \mbox{if}\ \ k\neq 0, \\
a(0) e^{\pm c z^0/\ell} & \mbox{if}\ \ k= 0
\end{array}
\right.
\]
for some constant $C\in \mathbb{C}$.
Let us consider the case $\pm k/q=1$, $C=0$, $\ell>0$, $\kappa\in \mathbb{R}$, $\Lambda(\neq0)\in \mathbb{R}$  and $z^0=t\in \mathbb{R}$.
Then we have $a(t)= \ell \cosh (ct/\ell)$, and $a$, $L$, $p$, $H$ are real-valued.
We need $a\ge \ell$ for the equation $H=0$ 
since $V>0$ and $H>0$ if $a<\ell$.
This means that  
the universe grows up as the de Sitter spacetime $a(t)=\ell \cosh(ct/\ell)$ for $t\ge0$ in real time $z^0=t$.
To consider the excluded case $a<\ell$, 
let us use the imaginary time $z^0=it$ for $t<0$.
Then we have $a(t)=\ell \cos(ct/\ell)$ for $t<0$.
We need $-\pi \ell/2c <t$ for $a(t)>0$.
Since $V(a)>0$ for $-\pi\ell/2c< t<0$, we have the model that the universe passes through the mountain of the potential $V(a)>0$  by the tunnel effect in imaginary time $z^0=it$ for $-\pi\ell/2c< t<0$, 
then it grows up as the de Sitter spacetime in real time $z^0=t$ for $t\ge0$.
On the other hand, the solution $a(z^0)=a(0)e^{\pm c z^0/\ell}$ for $k=0$ and real time $z^0=t$ does not need the tunnel effect.

%%%%%%%%%%%%%%%%%%%%%%%%%%%%%%%%%%
%
%%%%%%%%%%%%%%%%%%%%%%%%%%%%%%%%%%

\newsection{Remarks on the geodesic curves}
\label{Section-Geodesic}
Since the line element \eqref{Tau} is considered in complex coordinates, 
let us consider the geodesic curves derived form it.
We consider the generalized line element of \eqref{Line-Element} given by 
\[
-c^2(d\tau)^2=-c^2(dz^0)^2+g_{jk} dz^j dz^k
\]
for arbitrary complex-valued functions $\{g_{jk}\}_{1\le j,k\le n}$ 
which satisfy the symmetry conditions $g_{jk}=g_{kj}$ for $1\le j,k\le n$.
We put the velocity $v^j:=dz^j/dz^0$ for $1\le j\le n$.
Let $(g^{jk})$ be the inverse matrix of $(g_{jk})$. 
The change of upper and lower indices is done by $g_{jk}$ and $g^{jk}$.
We put $J:=1-v^j v_j/c^2$, and the potential $U=U(z^0,\cdots,z^n)$.
We denote the mass by $m$.
We define the Lagrangian $L$, the momenta $\{p_j\}_{j=1}^n$, 
and the Hamiltonian $H$ by 
\beq
\label{L-H}
L:=-mc^2J^{1/2}-U, \ \ 
p_j:=\frac{\partial L}{\partial v^j}, \ \ 
H:=v^j p_j-L.
\eeq
We put $K:=m^2c^2+p^j p_j$.
By direct calculations, we have 
$d\tau=J^{1/2}dz^0$, 
$\partial J/\partial v^j=-2v_j/c^2$, 
$\partial J/\partial g_{jk}=-v^jv^k/c^2$,
$p^j=m v^jJ^{-1/2}$,
$\partial L/\partial z^\alpha=-\partial U/\partial z^\alpha$,
$\partial L/\partial g_{jk} =J^{1/2} p^j p^k/2m$.
So that, the Euler-Lagrange equation for $L$ is given by 
\begin{eqnarray*}
0 &=& \frac{\partial L}{\partial z^j}
-
\frac{d}{dz^0} \frac{\partial L}{\partial v^j}
+
\frac{\partial L}{\partial g_{\ell m} }\frac{\partial g_{\ell m}}{\partial z^j}
\\
&=&
-
\frac{\partial U}{\partial z^j}
-
\frac{d g_{jk}}{d z^0} p^k
-
g_{jk}\frac{dp^k}{dz^0}
+
\frac{J^{1/2}}{2m} p^\ell p^m \frac{\partial g_{\ell m} }{\partial z^j}.
\end{eqnarray*}
We have 
\beq
\label{K-H}
K^{1/2}=m c J^{-1/2},
\ \ 
H=c K^{1/2}+U
\eeq 
by $p^j p_j=m^2 c^2 (J^{-1}-1)$.
Since we have 
$\partial K/\partial p^j=2p_j$ and $\partial K/\partial g_{jk}=p^j p^k$, 
we have $\partial H/\partial p^j=v_j$,  
$\partial H/\partial g_{jk}=cp^j p^k/2K^{1/2} $ and 
$\partial H/\partial z^\alpha=\partial U/\partial z^\alpha$.
Therefore, we have 
\[
\frac{dH}{dz^0}=
\frac{\partial U}{\partial z^0}-\frac{c}{2K^{1/2} } p^j \frac{\partial g_{jk}}{\partial z^0} p^k
+
\frac{c^2}{2K} p^jp^\ell p^m \frac{\partial g_{\ell m} }{\partial z^j}
=:-H_R,
\]
where we have put the right hand side as $-H_R$.
So that, the Hamiltonian $H$ satisfies the conservation 
\beq
\label{Conservation-H}
H(z^0)+\int_0^{z^0} H_R(w)dw=H(0).
\eeq

Now, we consider the transformation \eqref{z-x} with $\omega^1=\cdots=\omega^n$.
We consider the case $(g_{jk}):=a(z^0)^2 \mbox{\rm diag}(1,\cdots,1)$, 
namely, the case $q=1$ and $k=0$ in \eqref{Line-Element-a}.
Then we have 
\[
J=1-e^{2i(\omega^1-\omega^0)}
\frac{a(e^{i\omega^0} x^0)^2}{c^2} 
\sum_{j=1}^n 
\left(
\frac{dx^j}{dx^0}
\right)^2
\]
by the definition of $J$, and we also have  
\[
e^{i\omega^0} H_R= e^{2i(\omega^1-\omega^0)} \frac{ma}{ J^{1/2} } 
\frac{da}{dx^0} 
\sum_{j=1}^n 
\left(\frac{dx^j}{dx^0}\right)^2 
-\frac{\partial U}{\partial x^0}
\]
by $p^j=me^{i(\omega^1-\omega^0)} J^{-1/2} dx^j/dx^0$, 
\eqref{K-H} and $\partial g_{\ell m}/\partial z^j=0$.
Especially, let us consider the case that 
$\omega^0=0$, $\omega^1=0, \pm \pi/2, \pi$, $m\ge0$.
Let $U$ and $a(>0)$ be real-valued functions.
Let us consider the small velocity such that $J>0$.
Then $J, K, H, H_R$ are real-valued functions.
Moreover, if $-\partial U/\partial x^0\ge0$ and 
\[
\frac{da}{dx^0}
\left\{
\begin{array}{ll}
\ge0 & \mbox{for}\ \omega^1=0, \pi, \\
\le 0 & \mbox{for}\ \omega^1=\pm \frac{\pi}{2},
\end{array}
\right.
\]
then $H_R\ge0$.
So that, the spatial variance $da/dx^0\neq0$ has dissipative 
and antidissipative effects on the 
conservation \eqref{Conservation-H},
while the spatial invariance $da/dx^0=0$ yields the strict conservation 
$H(x^0)=H(0)$ for $x^0\in \mathbb{R}$ 
when $\partial U/\partial x^0=0$.
Namely, the conservation of the Hamiltonian depends on the spatial variance.

Let us consider the above argument for the relativistic velocity $v^\alpha:=dz^\alpha/d\tau$ for $0\le \alpha\le n$ for the general line element \eqref{Tau}.
We put $J:=-v^\alpha g_{\alpha\beta} v^\beta$, and the potential $U=U(z^0,\cdots,z^n)$.
We denote the mass by $m$.
We define the Lagrangian $L$, the momenta $\{p_\alpha\}_{\alpha=0}^n$, 
and the Hamiltonian $H$ by 
\beq
\label{L-H-alpha}
L:=-mc J^{1/2}-U, \ \ 
p_\alpha:=\frac{\partial L}{\partial v^\alpha}, \ \ 
H:=v^\alpha p_\alpha-L.
\eeq
We have 
\[
\frac{\partial L}{\partial z^\alpha} =-\frac{\partial U}{\partial z^\alpha},
\ \ 
p_\alpha=-\frac{mc}{2J^{1/2}} \cdot \frac{\partial J}{\partial v^\alpha},
\ \   
\frac{\partial L}{\partial g_{\alpha\beta} }=-\frac{mc}{2J^{1/2} }\cdot \frac{\partial J}{\partial g_{\alpha\beta}}.
\] 
The Euler-Lagrange equation for $L$ is given by 
\beq
\label{EL-Tao}
\frac{\partial L}{\partial z^\gamma}
-
\frac{d}{d\tau} \left(\frac{\partial L}{\partial v^\gamma}\right)
+
\frac{\partial L}{\partial g_{\alpha\beta} } \cdot \frac{\partial g_{\alpha\beta}}{\partial z^\gamma}
=0.
\eeq
Since $\partial J/\partial v^\gamma =-2v_\gamma$ and 
$\partial J/\partial g_{\alpha\beta}=-v^\alpha v^\beta$,
we have $v_\alpha = J^{1/2}p_\alpha/mc$ and 
$\partial L/\partial g_{\alpha\beta} =mc v^\alpha v^\beta/2J^{1/2}$.
The equation \eqref{EL-Tao} is rewritten as 
\begin{multline}
\label{EL-Tao-Re}
-\frac{\partial U}{\partial z^\gamma}
-\frac{d}{d\tau} 
\left(
\frac{mc}{J^{1/2}}
\right)
\cdot v_\gamma 
-\frac{mc}{J^{1/2}}\cdot \frac{dg_{\gamma \delta}}{d\tau} \cdot v^\delta
\\
-\frac{mc}{J^{1/2}}\cdot  g_{\gamma \delta} \cdot \frac{dv^\delta}{d\tau}
+\frac{mc}{2J^{1/2}}\cdot  
\frac{\partial g_{\alpha\beta}}{\partial z^\gamma}
\cdot  v^\alpha v^\beta=0.
\end{multline}
Multiplying $v^\gamma$ to the both sides in \eqref{EL-Tao-Re} and using the elementary facts 
\[
\frac{dU}{d\tau}=v^\gamma \frac{\partial U}{\partial z^\gamma},
\ \ 
J=-v^\gamma v_\gamma,
\]
\[
v^\gamma g_{\gamma \delta}\frac{d v^\delta}{d\tau}
=\frac{1}{2}
\left(
\frac{d}{d\tau} 
\left( v^\gamma v_\gamma \right) 
-v^\gamma \frac{d g_{\gamma\delta} }{d\tau} v^\delta
\right),
\ \ 
\frac{\partial g_{\alpha\beta} }{\partial z^\gamma} v^\gamma =\frac{d g_{\alpha\beta} }{d\tau},
\]
we have 
\[
\frac{dU}{d\tau}=0.
\]
Since we have $H=U$ by $v^\gamma v_\gamma=-J$ and the definitions of $H$ and $L$, 
the Hamiltonian $H$ is a constant function independent of the proper time $\tau$. 
Namely, we have 
\beq
\label{Conservation-H-Proper}
H(\tau)=H(0).
\eeq

Comparing the conservation laws \eqref{Conservation-H} and \eqref{Conservation-H-Proper}, 
we conclude that the Hamiltonian is strictly conserved 
with respect to the proper time $\tau$ 
independently of the spatial variance, 
while the Hamiltonian is dependent on the spatial variance with respect to the local time $z^0$.

%%%%%%%%%%%%%%%%%%%%%%%%%%%%%%%%%%
%
%%%%%%%%%%%%%%%%%%%%%%%%%%%%%%%%%%

%{\bf Acknowledgment.}
%The author is thankful to the anonymous referee for several comments to revise the paper.

%%%%%%%%%%%%%%%%%%%%%%%%%%%%%%%%%%%%%
%
%%%%%%%%%%%%%%%%%%%%%%%%%%%%%%%%%%%%%
{\small 

}

\end{document}